\documentclass[11pt,a4paper]{article}
\usepackage[utf8]{inputenc}
\usepackage[T1]{fontenc}
\usepackage[english]{babel}
\usepackage{lmodern}
\usepackage{microtype}
\usepackage[a4paper,top=2.5cm,bottom=2.5cm,left=2.7cm,right=2.7cm,footskip=1.2cm]{geometry}
\usepackage{amsmath,amssymb,amsthm,mathtools,bm}
\usepackage{graphicx,booktabs,array,tabularx,multirow,makecell,longtable}
\usepackage{caption,subcaption}
\usepackage{xcolor,enumitem,csquotes}
\usepackage{tikz}
\usetikzlibrary{arrows.meta,positioning,fit,calc,shapes.geometric,decorations.pathreplacing}
\usepackage{siunitx}
\newcolumntype{L}[1]{>{\raggedright\arraybackslash}p{#1}}
\newcolumntype{Y}{>{\raggedright\arraybackslash}X}
\usepackage[backend=biber,style=authoryear-comp,natbib=true,url=true,doi=true,isbn=false,maxbibnames=99,maxcitenames=2,uniquelist=false,sorting=nyt]{biblatex}
\usepackage[colorlinks=true,linkcolor=black,citecolor=black,urlcolor=black]{hyperref}
\usepackage[noabbrev,capitalize]{cleveref}
\theoremstyle{plain}
\newtheorem{proposition}{Proposition}

\theoremstyle{definition}
\newtheorem{definition}{Definition}

\newtheorem{criterion}{Criterion}
\theoremstyle{remark}

\crefname{assumption}{Assumption}{Assumptions}
\Crefname{assumption}{Assumption}{Assumptions}
\crefname{definition}{Definition}{Definitions}
\Crefname{definition}{Definition}{Definitions}
\crefname{criterion}{Criterion}{Criteria}
\Crefname{criterion}{Criterion}{Criteria}
\setlist{nosep,leftmargin=*}
\definecolor{seriesblue}{RGB}{31,78,121}
\definecolor{seriesgray}{RGB}{242,244,247}
\tikzset{seriesbox/.style={draw=black!70,rounded corners=2pt,align=center,inner sep=5pt,minimum height=9mm,fill=seriesgray},seriesarrow/.style={-{Latex[length=2.2mm]},thick,draw=black!75}}

\hypersetup{pdftitle={Fill-Side Behavioral Concentration on Polymarket: Identification Limits under Record-Level Attribution},pdfauthor={Maksym Nechepurenko},pdfsubject={Prediction-market microstructure; fill-side identification; behavioral concentration},pdfkeywords={prediction markets, Polymarket, central limit order book, market microstructure, behavioral tiers, on-chain data}}
\title{Fill-Side Behavioral Concentration on Polymarket:\\Identification Limits under Record-Level Attribution}
\author{Maksym Nechepurenko\thanks{Founder and Director of Research, ForesightFlow, the Research Department of Devnull FZCO, Dubai, United Arab Emirates. Email: \texttt{maksym@devnull.ae}.}}
\date{July 29, 2026}
\begin{document}
\maketitle

\begin{abstract}
This paper studies behavioral concentration in Polymarket's public executed-fill record and formalizes what that record can and cannot identify. A pre-publication reconciliation corrects the empirical scope: the archived extraction covers the legacy CTF Exchange over Polygon blocks 86,008,447--86,107,178, approximately 25 April 2026 17:09 UTC through 28 April 2026 00:00 UTC, rather than the full 21--27 April week stated previously. It contains 13,356,931 \texttt{OrderFilled} records, 77,204 addresses with at least five attributed records, and 43,116 token identifiers; negative-risk markets are absent.

The archived feature construction credits both maker and taker addresses on each \texttt{OrderFilled} record. This convention is not invariant to match fragmentation: one aggressive order can generate several maker records plus a taker summary, inflating record intensity and altering mean record notional. Moreover, mint and burn executions do not admit a conventional one-buyer/one-seller directional interpretation. Consequently, the reported one-cluster density-based result is retained only as a null under the original record-level representation, not as evidence that the participant population is intrinsically unimodal. The concentration table is likewise interpreted as attribution-weighted arithmetic under that convention.

Two methodological results remain independent of these measurement corrections. First, public fills do not identify the address-level quote lifecycle required to infer continuous market making, spoofing, or strategic withdrawal. Second, a one-cluster result rejects density separation in an observed feature space, not latent economic heterogeneity. A clustering-independent threshold system still records substantial concentration: the archived high-scale/high-intensity subtotal contains 12.6\% of addresses and 81.4\% of attributed notional. A match-normalized, mint-aware, multi-window replication is required before treating either the cluster null or exact tier shares as stable venue properties.
\end{abstract}
\clearpage

\noindent\textbf{Keywords:} prediction markets; Polymarket; central limit order book; market microstructure; behavioral concentration; clustering; on-chain data.\\
\textbf{JEL:} G14, G23, G24.

\section{Introduction}
Prediction markets require informed traders, arbitrageurs, liquidity suppliers, and large directional participants \citep{glosten1985,wolfers2004}, but public data do not observe these roles symmetrically. Executed trades are often visible. Unexecuted intentions are not. On a hybrid central limit order book (CLOB), an on-chain fill log may reveal addresses associated with an execution while leaving order submission, modification, queue position, and cancellation off-chain.

This asymmetry creates two distinct empirical hazards. The first is conceptual: repeated maker-side fills may be consistent with professional liquidity supply, but they do not establish continuous two-sided quoting, posted-spread choice, inventory control, or quote withdrawal. The second is representational: an \texttt{OrderFilled} record is not necessarily one economic decision, one conventional transfer, or one independently directed trade leg. Record-level attribution can therefore alter the geometry on which unsupervised clustering operates.

The present revision corrects the empirical window and makes the attribution convention load-bearing rather than incidental. It does not rerun the original computation. Its purpose is to identify which conclusions remain valid under the archived run and which require a match-normalized and mint-aware replication.

The main conclusions are:
\begin{enumerate}
\item Public settlement logs support exact arithmetic on the archived execution records but not address-level quote-lifecycle inference.
\item The archived density-based null is conditional on a 2.285-day, legacy-CTF-only, record-attributed feature space.
\item Threshold cohorts show strong attribution-weighted concentration, but the exact shares are not stable venue parameters without sensitivity to match aggregation and execution type.
\item Fill-side evidence can support surveillance and risk calibration only when its unit of observation and direction semantics are stated explicitly.
\end{enumerate}

The paper is the fourth study in a series on event-linked perpetual futures \citep{paper1,paper2,paper3}. Papers 1--3 use the present evidence only within its observability boundary: as a source of executed-flow scale and concentration, not as direct proof of market-making roles or manipulation intent.

\section{Corrected empirical scope}
\subsection{Block range and time interval}
The authoritative extraction range is Polygon blocks 86,008,447--86,107,178 on the legacy Polymarket CTF Exchange. A pre-publication reconciliation of this range against an independent reconstruction reproduced all three headline quantities to within 0.1\% and showed that the first in-range fill occurs at approximately 25 April 2026 17:08:58 UTC, while the upper boundary lies at the transition into 28 April UTC. The archived run therefore covers approximately 2.285 days, not the full seven-day interval previously stated.

The corrected scope is:
\begin{table}[ht]
\centering
\caption{Corrected archived sample scope.}
\begin{tabular}{ll}
\toprule
Field & Archived run \\
\midrule
Exchange & Legacy CTF Exchange only \\
Polygon blocks & 86,008,447--86,107,178 \\
Approximate UTC interval & 25 Apr 2026 17:09--28 Apr 2026 00:00 \\
Duration & 2.285 days \\
\texttt{OrderFilled} records & 13,356,931 \\
Token identifiers & 43,116 \\
Addresses with at least five attributed records & 77,204 \\
Negative-risk markets & absent from the extraction \\
\bottomrule
\end{tabular}
\end{table}

The intended design targeted a dense complete week because local compute, storage, and processing capacity were limited. The effective archived extraction that generated the reported outputs is nevertheless the block range above, consistent with a truncated or resumed backfill rather than the intended calendar window. The present revision therefore treats the block range as authoritative. Any statement previously framed as a seven-day or one-week population result must be read as an archived partial-window result.

\subsection{Venue-family exclusion}
The extraction includes only the legacy CTF Exchange contract and excludes the contemporaneous negative-risk exchange. The 43,116 token identifiers therefore represent binary CTF activity under this contract filter, not the full Polymarket market family. Multi-outcome and negative-risk markets---including many candidate, nominee, and mutually exclusive event structures---are outside the sample by construction.

This exclusion is separate from short-window selection. Extending the same block range without adding the negative-risk contract would still omit that market family.

\subsection{Versioned extraction beyond the sample}
A full-history replication must use a versioned contract and event-signature registry. Polymarket's official documentation records a CLOB V2 cutover on 28 April 2026 with new CTF and negative-risk exchange contracts \citep{polymarket2026v2}. A query fixed to the legacy contract or legacy topic can silently return no post-cutover activity. The present paper does not combine V1, V2, negative-risk, or later combinatorial exchange data.

\section{The unit of observation}
\subsection{Record-level two-sided attribution}
The archived pipeline counts each \texttt{OrderFilled} record once and credits both the maker and taker address appearing on that record. Define a \emph{source record attribution} as one address--record incidence produced by this rule.

This is a reproducible convention, but it is not a count of independent trading decisions. In a fragmented match, one aggressive order can interact with several resting orders. The aggressive address can then appear repeatedly across maker records and again in a taker-summary record.

\begin{proposition}[Record attribution is not match invariant]
Consider one aggressive decision that executes total notional $V$ against $N$ resting orders. Under a record-level convention that credits both addresses on every maker record and also credits the taker summary, the aggressor's attributed-record count depends on $N$, even though the number of aggressive decisions and total economic notional remain fixed. Consequently, record intensity and mean attributed-record notional are not invariant to order-book fragmentation.
\end{proposition}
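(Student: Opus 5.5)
The argument is a direct counting computation under an explicit model of how one fragmented match is written to the log. Fix one aggressive decision by address $a$ that executes total notional $V$ against $N \ge 1$ resting orders. The resting orders belong to makers $m_1,\dots,m_N$ and fill notionals $v_1,\dots,v_N>0$ with $\sum_i v_i = V$. Following the description in the preceding subsection, the match emits $N$ maker records, the $i$-th listing the pair $(m_i,a)$ with notional $v_i$, plus one taker-summary record listing $a$ (with a counterparty field, possibly the exchange) and notional $V$. I will state this as the modelling assumption and note that the conclusion only needs the number of records on which $a$ appears to grow with $N$. The exact form of the summary record is not essential.

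\textbf{Step 1: the aggressor's count.} Under the rule that credits both addresses on every record, $a$ receives one source record attribution from each maker record and one from the summary record. Its count is therefore $C_a(N) = N+1$. This is strictly increasing in $N$. Meanwhile the number of aggressive decisions is $1$, and the economic notional is $V$ regardless of how $V$ is split. This establishes the first claim.

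\textbf{Step 2: consequences for the derived features.} Record intensity is the attributed-record count per unit time or per window, so it scales with $N+1$ for a fixed decision stream. This gives non-invariance directly. For mean attributed-record notional, the attributed notional credited to $a$ is $\sum_i v_i + V = 2V$, so the mean is
\[
\bar v_a(N) = \frac{2V}{N+1}.
\]
Taking $N=1$ and $N=2$ gives $V$ and $2V/3$, which differ. So neither feature is a function of $(V,\text{number of decisions})$ alone. If instead only the summary record is credited with notional, the mean is $V/(N+1)$, again $N$-dependent. I will remark that any convention in which $a$ appears on $N+c$ records with total credited notional independent of $N$ yields the same conclusion.

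\textbf{Main obstacle.} The mathematics is trivial. The real difficulty is making the record model precise enough to be faithful to the \texttt{OrderFilled} semantics while keeping the claim robust to details such as whether the summary record carries notional $V$ and who its counterparty is. I will handle this by proving the result for the general convention above and presenting $2V/(N+1)$ as one instance. A single pair of values of $N$ giving different outputs suffices to show non-invariance.
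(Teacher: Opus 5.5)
Your proposal is correct and uses the same direct counting argument as the paper: each additional resting counterparty emits another maker record that credits the aggressor, so the aggressor's count $N+1$ changes with $N$, and so does any feature normalized by that count. Your explicit check that the credited notional is $2V$, or more generally independent of $N$, so that the mean is $2V/(N+1)$, sharpens the paper's last step. The paper only asserts that dividing attributed notional by the count changes the ratio, without verifying that the numerator does not co-move with $N$.
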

\begin{proof}
Holding the aggressive order and total notional fixed, increasing the number of resting counterparties increases the number of emitted maker records. Under two-sided attribution, the same aggressor is credited on each such record and on the summary record. The attributed count therefore changes with $N$. Any feature that divides total or attributed notional by that count also changes, although the underlying aggressive decision is unchanged.
\end{proof}

The proposition identifies a directional measurement distortion, not generic isotropic noise. Record intensity rises with fragmentation while mean notional per attributed record falls. A density algorithm can therefore see a connected cloud created partly by venue mechanics.

\subsection{Mints, burns, and direction semantics}
An execution record can represent a conventional transfer, a mint of complementary claims, or a burn/merge. In a mint, both economic participants acquire complementary outcome claims; the event does not have the same directional meaning as one buyer acquiring an existing claim from one seller. In a burn, complementary claims are extinguished against collateral.

The archived feature construction did not separate all three execution semantics before forming the directional feature. Accordingly, that feature is retained only as a \emph{record-side imbalance proxy}. It cannot be interpreted as a validated net-buying or net-selling measure across the whole sample.

\section{Observed and unobserved evidence}
\subsection{Fill-side and quote-lifecycle observability}
\begin{definition}[Fill-side observability]
An address-level quantity is fill-side observable if it is a deterministic function of the public execution records and the stated attribution convention.
\end{definition}

\begin{definition}[Quote-lifecycle observability]
An address-level quantity is quote-lifecycle observable if the data identify order submission, modification, queue state, cancellation, and execution status for that address's orders.
\end{definition}

\begin{table}[ht]
\centering
\caption{Evidence hierarchy after the scope correction.}
\begin{tabularx}{\textwidth}{>{\raggedright\arraybackslash}p{0.19\textwidth} >{\raggedright\arraybackslash}X >{\raggedright\arraybackslash}X}
\toprule
Layer & Observable examples & Unsupported claims without additional data \\
\midrule
Execution records & record count, attributed notional, market breadth, intraday activity, maker/taker appearance & independent decision count, conventional direction for all mints/burns, complete inventory policy \\
Market-level books & best bid/ask, depth snapshots, spread changes, aggregate withdrawal & identity of withdrawing supplier, address-level coordination \\
Quote lifecycle & placement, modification, cancellation, non-fill behavior & unavailable from the public on-chain fill ledger \\
External and oracle evidence & event outcome, proposal/dispute records, public-information timing & beneficial ownership, duty, intent \\
\bottomrule
\end{tabularx}
\end{table}

\begin{proposition}[Non-identification of quote behavior from fills]
Let $F_a$ denote the executed-fill history attributed to address $a$. In the absence of address-attributed placement and cancellation records, there exist distinct quote processes $Q_a$ and $Q'_a$ that generate the same $F_a$ but differ in quote intensity, lifetime, two-sidedness, and cancellation behavior. Therefore those quote-lifecycle properties are not identified by $F_a$.
\end{proposition}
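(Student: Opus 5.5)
The approach is a direct construction. I will exhibit two quote processes that share every execution but differ in everything the fills do not record. A quote process $Q_a$ is taken to be a marked point process of order events for address $a$. Each order carries a submission time, side, price, size, possible modifications, queue state, and a terminal status (filled, partially filled, or cancelled). The fill map $\Phi$ sends $Q_a$ to $F_a$ by keeping only the executed portions, together with their times, prices, sizes, counterparties, and maker/taker role under the stated attribution convention. By the definition of fill-side observability, anything identified from public data must be a function of $F_a$, and therefore of $\Phi(Q_a)$. Non-identification of a property $P$ thus follows once we find $Q_a, Q'_a$ with $\Phi(Q_a)=\Phi(Q'_a)$ and $P(Q_a)\neq P(Q'_a)$.

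The construction proceeds in three steps.

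\textbf{Step 1: a minimal baseline.} Fix any realized $F_a$ and let $Q_a$ be the minimal process that generates it. Each maker fill comes from a resting order posted just before execution, sized exactly to the filled quantity, and never cancelled. Each taker fill comes from an aggressive order submitted at the fill time.

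\textbf{Step 2: four perturbations that leave $F_a$ unchanged.} Form $Q'_a$ from $Q_a$ by applying the following changes.
\begin{enumerate}
\item Add unexecuted orders on the opposite side of the book, at prices strictly worse than every price that traded during their lifetime, and cancel them afterward. This changes two-sidedness and cancellation counts.
\item Lengthen the lifetime of each maker order by moving its submission time earlier. This changes quote lifetime.
\item Split resting orders into submit/cancel/resubmit chains at the same price. This changes quote intensity.
\item Add oversized orders whose residual is cancelled after a partial fill. This changes the cancel-to-fill ratio.
\end{enumerate}
None of these changes adds an execution or alters an executed quantity, so $\Phi(Q'_a)=\Phi(Q_a)=F_a$.

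\textbf{Step 3: check that the properties differ.} Measures such as order count per unit time, mean resting time, fraction of time quoting both sides, and cancellation count take different values on $Q_a$ and $Q'_a$ by construction. Since any estimator factors through $F_a$, it must return the same value on both, so it cannot recover these properties.

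The main obstacle is the invariance claim in Step 2. Added or relocated orders must not change matching outcomes, either for $a$ or for the counterparties recorded on $a$'s fills. Under price-time priority, an added order at the same price as a resting order could shift queue position and so reassign which maker fills. Cancelled orders could also have been hit by an aggressor.

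To handle this, I would first restrict the added quotes to price levels that never trade during their lifetime. Such quotes are away from the touch, and their existence is guaranteed if the tick grid leaves at least one untraded level on the relevant side. For the lifetime extension, I would place the earlier-submitted order at the back of the queue or at a non-marketable level until a re-price. If that strict invariance fails in some configuration, the fallback is to let the rest of the book co-vary. The statement concerns only the fill history attributed to $a$, so it suffices that the other participants' order flow can be chosen so that $a$'s fills are reproduced exactly. I would make this explicit by allowing other agents' quote processes to differ between the two scenarios.
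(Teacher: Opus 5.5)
Your construction is essentially the paper's: a baseline $Q_a$ containing only orders that execute, and a $Q'_a$ obtained by adding unfilled or cancelled orders, so the map from quote processes to fill histories is many-to-one. Your extra perturbations make explicit the lifetime and two-sidedness differences that the paper's one-line proof covers only implicitly, and you handle matching invariance, which the paper ignores. One caution: restricting added quotes to previously untraded levels is not sufficient by itself, since an aggressor whose residual was killed or left resting for lack of liquidity could now trade against them. It is your fallback of letting the other participants' order flow co-vary, legitimate because only $F_a$ must be reproduced, that actually closes the argument.
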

\begin{proof}
Construct $Q_a$ as a process containing only orders that eventually execute, and $Q'_a$ as the same executed orders plus any number of additional unfilled or cancelled orders. Both induce the same fill history, while quote intensity and cancellation statistics differ arbitrarily. The map from quote processes to fill histories is many-to-one.
\end{proof}

The proposition rules out address-level spoofing detection from fills alone and prevents repeated maker-side appearance from being treated as direct identification of a market maker.

\section{Archived behavioral representation}
For address $a$, the archived six-dimensional vector is
\[
\mathbf f^{\rm rec}(a)=(f_1^{\rm rec},f_2^{\rm rec},f_3^{\rm rec},f_4,f_5,f_6),
\]
where
\begin{align}
 f_1^{\rm rec}(a)&=\log\!\left(1+\frac{n_a^{\rm rec}}{h_a}\right) &&\text{record intensity},\\
 f_2^{\rm rec}(a)&=\log(1+\bar N_a^{\rm rec}) &&\text{mean attributed-record notional},\\
 f_3^{\rm rec}(a)&=\frac{V_{a,+}^{\rm rec}-V_{a,-}^{\rm rec}}{V_{a,+}^{\rm rec}+V_{a,-}^{\rm rec}} &&\text{record-side imbalance proxy},\\
 f_4(a)&=\sum_m s_{a,m}^2 &&\text{market concentration},\\
 f_5(a)&=-\sum_{r=0}^{23}p_{a,r}\log p_{a,r} &&\text{intraday entropy},\\
 f_6(a)&=\log(1+M_a) &&\text{market breadth}.
\end{align}
Here $n_a^{\rm rec}$ counts address--record attributions, not independent orders or matches. The first two coordinates share the fragmentation dependence described above. The third coordinate is not uniformly directional across mint, burn, and transfer semantics.

\begin{criterion}[Role-label discipline]
A label is admissible as a primary empirical label only if its defining properties are observed in the active data layer. Otherwise it must be presented as a transparent threshold cohort, behavioral proxy, or hypothesis.
\end{criterion}

Accordingly, the paper does not identify market makers, passive liquidity providers, spoofers, or institutional firms from the archived feature vector.

\section{Clustering result and its corrected interpretation}
The archived run applied density-based spatial clustering of applications with noise (DBSCAN) \citep{ester1996} to robustly scaled versions of $\mathbf f^{\rm rec}(a)$. Across fifteen registered configurations, every run returned one dense cluster and zero noise. An exploratory $k$-means partition with $k=5$ had silhouette score 0.227 \citep{rousseeuw1987}.

These are exact statements about the archived transformation. They are not yet robust statements about the participant population because:
\begin{enumerate}
\item the interval is 2.285 days rather than seven days;
\item the sample excludes negative-risk markets;
\item record count and mean record notional depend on match fragmentation;
\item the direction proxy mixes transfer, mint, and burn semantics.
\end{enumerate}

\begin{proposition}[One-cluster output does not imply one economic type]
Suppose observed features satisfy $\mathbf f(a)=g(z_a,u_a)$, where $z_a$ is a latent economic role and $u_a$ contains scale, event mix, attribution convention, and measurement effects. A connected support for $\mathbf f(a)$ can arise even when $z_a$ takes multiple values. Therefore a one-cluster result rejects density separation in the specified observed feature space, not latent economic heterogeneity.
\end{proposition}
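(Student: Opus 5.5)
The argument is an existence construction, in the same spirit as the non-identification result for quote behavior. It suffices to exhibit one data-generating map $g$ and one distribution of $(z_a,u_a)$ in which $z_a$ takes at least two values while the law of $\mathbf f(a)=g(z_a,u_a)$ has connected support. I will also make that law unimodal, so that no density-threshold rule of DBSCAN type can separate it. I will take $z_a\in\{0,1\}$, say ``liquidity-supply-like'' versus ``directional'', and let $u_a$ be a continuous nuisance vector, representing scale and fragmentation, with a density of full support on a connected set $U\subset\mathbb R^d$.

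The construction proceeds in four steps.

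\emph{First}, I choose $g(z,u)=h(u)+\delta z\,e_1$, where $h$ is continuous and $\delta e_1$ is a shift in one coordinate. For concreteness, I take the shift in record intensity $f_1^{\rm rec}$. By the earlier proposition on match invariance, record attribution already moves this coordinate with fragmentation $N$, so such a shift is economically natural.

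\emph{Second}, I choose $\delta$ smaller than the spread of $h(u)$ in that coordinate. For example, I take $h(u)=u$ with $u$ Gaussian, or with $u$ uniform on a box whose width exceeds $\delta$. The two conditional supports $h(U)$ and $h(U)+\delta e_1$ then overlap. Their union is a union of two connected sets with nonempty intersection, hence connected. In the Gaussian case the mixture density is positive everywhere, and for $\delta$ below the standard mixture-unimodality bound of two standard deviations it is also unimodal.

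\emph{Third}, I link this to DBSCAN. If the sample is dense on a connected set, every pair of points is joined by a chain of core points at spacing below $\varepsilon$. For sufficiently many points and the registered $(\varepsilon,\text{minPts})$, the algorithm therefore returns one cluster, even though $\Pr(z_a=1)\in(0,1)$. The claim holds a fortiori in the population (support) version, so I will state it at the level of support or density and only remark on the finite-sample version.

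\emph{Fourth}, I conclude by contraposition. Observing one cluster is compatible with both a homogeneous and a heterogeneous $z$. What it refutes is only the existence of a density gap in the observed $\mathbf f$-space. That is precisely ``density separation in the specified observed feature space.''

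I expect the main obstacle to be stating precisely what ``one-cluster result'' means, since DBSCAN output is sample- and parameter-dependent. I would address this by formalizing the result as ``the support of $\mathbf f$ is connected, and the density has no region below threshold $\rho$ separating mass'' (the level-set notion of clusters). I would then show that the mixture satisfies this notion for suitable $\delta$, rather than proving finite-sample DBSCAN consistency, which I would cite as standard instead of reproving.
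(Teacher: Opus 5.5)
Your proposal is correct. The paper gives no separate proof of this proposition. Its second sentence, that a connected support can arise when $z_a$ takes several values, is asserted as evident, in the same many-to-one spirit as the proof of the non-identification result for quote behavior. Your location-shift mixture $g(z,u)=h(u)+\delta z\,e_1$ supplies an explicit witness.

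Your insistence on unimodality rather than mere connected support is a real sharpening of the paper's wording. Connected support guarantees a one-cluster output only asymptotically: with $(\varepsilon,\text{minPts})$ fixed, DBSCAN's implied density threshold, roughly $\text{minPts}/(n\,\mathrm{vol}\,B_\varepsilon)$, vanishes as $n$ grows. At a fixed sample size, such as the archived $77{,}204$ addresses, a density with connected support but a deep valley between the two conditional components would still be split.

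Your Gaussian example with $\delta\le 2\sigma$ closes this gap. Take $p\,N(0,\sigma^2 I)+(1-p)\,N(\delta e_1,\sigma^2 I)$ and let $q$ be the posterior weight of the shifted component. Then $\partial_1^2\log f=-\sigma^{-2}+\delta^2\sigma^{-4}q(1-q)\le-\sigma^{-2}+\delta^2\sigma^{-4}/4\le 0$, and the other Hessian entries are $-\sigma^{-2}$ or zero. So the mixture is log-concave for every $p$, all superlevel sets are convex, and no density level separates the types. The paper's bare assertion buys brevity. Your version buys a statement robust to sample size and to the choice of $(\varepsilon,\text{minPts})$, which fits a null reported across fifteen registered configurations.

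Two precision points:
\begin{itemize}
\item With a general covariance $\Sigma$, the two-standard-deviation bound must be read in Mahalanobis distance, $\delta\sqrt{(\Sigma^{-1})_{11}}\le 2$. This is stricter than $\delta\le 2\sqrt{\Sigma_{11}}$ whenever the first coordinate is correlated with the others, so fix $\Sigma=\sigma^2 I$.
\item Your third step claims every pair of sample points is chained through core points, which fails for the Gaussian version. Its sparse tail points need not be core, producing noise or occasionally tiny tail clusters. Note also that there the support is all of $\mathbb R^d$ for any $\delta$, so unimodality does all the work. The uniform-on-overlapping-boxes version is the clean witness for a finite-sample ``one cluster, zero noise'' output. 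Its density is bounded below on a compact connected set, so with probability tending to one every point is core and the $\varepsilon$-neighborhood graph is connected.
\end{itemize}
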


The proposition remains valid independently of whether a future match-normalized rerun produces one or several clusters. The archived one-cluster result should now be read as a benchmark configuration that future sensitivity analysis must reproduce before changing one design axis at a time.

\section{Threshold cohorts and concentration}
The original threshold system partitions addresses by observed scale, attributed-record intensity, breadth, and attributed notional. The numerical cohorts are retained as archived arithmetic under the original convention.

\begin{table}[ht]
\centering
\caption{Archived threshold cohorts under record-level two-sided attribution. Whale status is an overlay.}
\begin{tabularx}{\textwidth}{>{\raggedright\arraybackslash}X r r r}
\toprule
Cohort & Addresses & Population share & Attributed-notional share \\
\midrule
Whale overlay & 68 & 0.09\% & 28.0\% \\
High-frequency operator & 2,952 & 3.82\% & 23.5\% \\
Power trader & 6,738 & 8.73\% & 29.9\% \\
Active lower-scale & 2,062 & 2.67\% & 10.6\% \\
High-breadth operator & 2,025 & 2.62\% & 1.1\% \\
Episodic low-notional & 63,358 & 82.07\% & 6.8\% \\
\midrule
High-scale/high-intensity subtotal & 9,758 & 12.64\% & 81.4\% \\
\bottomrule
\end{tabularx}
\end{table}

Define the archived concentration multiplier
\[
\chi_k^{\rm rec}=\frac{\text{attributed-notional share}_k}{\text{population share}_k}.
\]
For the high-scale/high-intensity subtotal, $\chi^{\rm rec}\approx6.44$; for the episodic low-notional cohort, $\chi^{\rm rec}\approx0.083$. The ratio of these average attribution intensities is approximately 77.

This is not a claim about the share of unique economic decisions, entity-level activity, or venue-wide notional. It is a concentration result under the archived record-attribution convention. Match normalization can change cohort membership because it directly changes intensity and average record-notional features.

\section{Microstructure signatures that remain usable}
\subsection{Breadth, timing, and concentration}
Market breadth, active-hour distribution, and market concentration remain deterministic properties of the archived address--record table, conditional on the contract and window filters. Their economic interpretation remains limited by address splitting and entity aggregation.

\subsection{Adverse-selection proxies}
For a record with validated execution direction, a signed markout can be written
\[
A_j=s_j(p_{j+\Delta}-p_j).
\]
Such a measure is valid only when $s_j$ has a well-defined economic direction. Mint and burn records require separate treatment rather than forced classification under a transfer convention.

\subsection{Price impact}
A reduced-form Kyle-type coefficient \citep{kyle1985} may be estimated from
\[
\Delta p_{m,t}=\alpha_m+\lambda_m q_{m,t}+\varepsilon_{m,t}.
\]
In thin bounded markets, raw estimates can be unstable. Winsorized values can be descriptive but should not be treated as structural parameters. Signed flow must also inherit the execution-type restrictions above.

\subsection{Surveillance candidates}
Large gross attributed volume combined with low net attributed position can generate candidate flags. It does not establish self-trading or manipulation because intermediation, hedging, cross-market arbitrage, fragmented matching, and multiple addresses can generate similar records.

\section{Required replication design}
A stronger replication should treat the archived configuration as rung zero and change one dimension at a time.

\subsection{Window axis}
Use:
\begin{enumerate}
\item the exact archived block range;
\item the full 21--27 April UTC week;
\item one month;
\item one quarter;
\item full contract history;
\item rolling and non-overlapping windows within each longer horizon.
\end{enumerate}
Rolling and non-overlapping windows are needed because a single nested history confounds participant persistence with venue and infrastructure changes.

\subsection{Attribution and direction axis}
At minimum compare:
\begin{enumerate}
\item the archived record-level two-sided attribution;
\item match-normalized counts and mean notional;
\item the feature vector with the direction proxy removed;
\item a mint/burn-aware direction specification;
\item a transfer-only direction specification.
\end{enumerate}

For each cell report cluster count, noise share, silhouette diagnostics, tier stability, transition matrices across windows, concentration, and address-level persistence. If the one-cluster null survives match normalization, execution-type separation, and longer windows, it becomes materially stronger. If clusters emerge, the original null remains a valid statement only about the archived representation.

\section{Implications for leveraged event markets}
The corrected evidence constrains, rather than directly calibrates, leveraged-event design.

First, concentrated address-level activity means a homogeneous synthetic trader population is a weak default. However, concentration parameters should be recalculated on match-normalized data before use in a risk engine.

Second, public books are not committed liquidation capacity, and fills do not reveal the cancellation process that determines whether depth survives stress.

Third, halt timing cannot be calibrated from address-level quote withdrawal using public fill data. At most, match-normalized fill-density decay can supply a weaker execution proxy.

Fourth, manipulation surveillance must be channel-specific. Fill anomalies can motivate investigation; quote-side allegations require quote-lifecycle evidence; oracle conduct requires resolution logs; outcome manipulation requires external-event evidence.

\section{Evidence discipline}
\begin{criterion}[Evidence-grade rule]
Every empirical statement should be assigned to one of four grades:
\begin{enumerate}
\item direct identity or arithmetic under a stated source and attribution convention;
\item deterministic feature or cohort derived under that convention;
\item descriptive statistical association;
\item role, intent, causal, or population-general interpretation requiring additional evidence.
\end{enumerate}
Only grades 1--3 are claimed as findings in this paper, and grade 2 must name the measurement convention on which it depends.
\end{criterion}

This rule and Proposition 2 are the durable contributions of the paper. They remain valid whether later replication preserves or rejects the archived clustering null.

\section{Limitations}
\textbf{Partial window.} The extraction covers approximately 2.285 days, not one week. Persistent archetypes cannot be inferred from this horizon.

\textbf{Legacy CTF only.} Negative-risk and multi-outcome market families are absent.

\textbf{Record attribution.} Both maker and taker are credited per \texttt{OrderFilled} record; intensity and mean record notional are not invariant to match fragmentation.

\textbf{Direction semantics.} The archived direction proxy does not fully distinguish conventional transfers, mints, and burns.

\textbf{Address pseudonymity.} Addresses are execution entities, not necessarily independent traders or firms.

\textbf{No address-level quote lifecycle.} Posted spread, quote lifetime, cancellations, quote stuffing, and spoofing are not identified.

\textbf{No causal identification.} Concentration and microstructure associations do not establish market-quality or welfare effects.

\section{Conclusion}
The corrected archival result is narrower than a population taxonomy of Polymarket participants. The study observes a dense 2.285-day slice of legacy-CTF execution records under a two-sided record-attribution convention. It documents strong attribution-weighted concentration and a one-cluster output under that representation. The feature space is affected by match fragmentation, execution-type semantics, contract-family exclusion, and the short horizon.

What survives these corrections is methodologically useful. Public fills do not reveal the quote lifecycle. A one-cluster output does not imply one economic type. Threshold cohorts are transparent measurement rules, not natural participant classes. Empirical claims must name both the observed data layer and the transformation that creates the feature space.

A match-normalized, mint-aware, multi-window rerun is therefore not a cosmetic robustness check; it is the required test of whether the archived null reflects the population or the measurement design.

\section*{Data and code statement}
This revision performs no new empirical run. It corrects the time interval, contract-family scope, unit-of-observation description, feature interpretation, and claim class of the archived results. The original numerical outputs are retained only as results under the legacy-CTF, record-level two-sided attribution convention. A future public release should provide exact block and contract registries, match-construction rules, execution-type classification, aggregate cohort outputs, processing manifests, and a reconstruction procedure without publishing ranked address-level profiles.

\printbibliography[heading=bibintoc,title={References}]
\end{document}